\newcommand{{\one}}{{\mathbf{1}}}
\newtheorem{theorem}{Theorem}
\newtheorem{lemma}{Lemma}
\newtheorem{corollary}{Corollary}
\theoremstyle{remark}
\newtheorem*{remark}{Remark}
\newtheorem{definition}{Definition}
\begin{document}

\begin{abstract}

In R.D. Sorkin's framework for logic in physics a clear
separation is made between the collection of 
unasserted propositions about the physical world and the affirmation or denial of these
propositions by the physical world. The unasserted propositions 
form a Boolean algebra because they correspond to subsets of an underlying set of 
spacetime histories. \textit{Physical} rules of inference, apply not to the 
propositions in themselves but to the affirmation and denial 
of these propositions by the actual world. This \textit{physical logic}
may or may not respect  the propositions' underlying Boolean 
structure. We prove that this logic is Boolean if and 
only if the following three axioms hold: (i) The world is affirmed, (ii) Modus Ponens and (iii) If a
proposition
is denied then its negation, or complement, is affirmed. 
When a physical system is governed by a dynamical law in the form of a  quantum measure 
with the rule that events of zero measure are denied, the 
axioms (i) - (iii) prove to be too rigid and need to be modified. One promising scheme for 
quantum mechanics as quantum measure theory corresponds to replacing axiom (iii) 
with axiom (iv) Nature is as fine grained as the dynamics allows.

\end{abstract}

\title{Physical Logic}
\author{Kate Clements}
\affiliation{Blackett Laboratory, Imperial College, London SW7 2AZ, U.K.}
\author{Fay Dowker}
\affiliation{Blackett Laboratory, Imperial College, London SW7 2AZ, U.K.}
\affiliation{Perimeter Institute, 39 Caroline St.\ N., Waterloo, ON N2L 2Y5, Canada}
\affiliation{Institute for Quantum Computing, University of Waterloo, ON N2l 3G1, Canada}
\author{Petros Wallden}
\affiliation{School of Informatics, University of Edinburgh, Edinburgh EH8 9AB, UK}
\affiliation{IPaQS, Heriot-Watt University, Edinburgh EH14 1AS, UK}

\maketitle

\section{Introduction}
The view that the mode of reasoning we use for classical physics is not appropriate when discussing a quantum system is widespread, if not mainstream.
For example, in the Quantum Mechanics volume of his \emph{Lectures on Physics}, R.P.~Feynman refers to ``the logical tightrope on which we must walk if we wish to describe nature successfully'' \cite{FeynmanLecs}.
In order to investigate the nature of this ``tightrope'' further, in a systematic way, we need a framework for logic that is relevant for \emph{physics} (rather than, say, mathematics or language) and within which the logic used for classical physics can be identified, characterised, assessed and, if necessary, replaced.
Recently, a unifying foundation for physical theories with spacetime character --- Generalised Measure Theory (GMT) --- which provides just such a framework has been set out 
\cite{Sorkin94,Sorkin97,Sorkin07a,Sorkin07b}.
The key to the clarity that this formalism brings to the study of deductive inference in physics is the distinction it makes between the \emph{assertion} of propositions about the physical world and the propositions themselves, the latter corresponding merely to questions waiting to be answered \cite{Sorkin10,Sorkin11}.
Identifying the answers to the questions as the physical content of the theory, as explained below, makes it a small step to consider the possibility of non-standard rules of inference; to do so is to open a new window on the variety of antinomies with which quantum mechanics is so infamously plagued (or blessed) \cite{Sorkin07a,Dowker08,Dowker11,Henson11,MikeDion}.

We begin in Section \ref{sec:basics} by  
identifying three basic structures that constitute a general framework for reasoning about the physical world. 
Taking the revolution of relativity seriously, we
 assume that the physical theory has a \textit{spacetime} character in the sense that it is based on a set of spacetime histories  which represent the finest grained descriptions of the system conceivable within the theory. 
In Section \ref{sec:rules} we give names to certain rules of inference and situate classical, Boolean rules of inference within this framework.
In Section \ref{sec:tlp} we investigate which rules are implied by which others, abstractly, by mathematical manipulation alone, setting aside the question of which might be necessary or desirable for 
physics. We
focus on the rule of inference known as \emph{modus ponendo ponens} (\emph{modus ponens} for short), the basis for deductive proof without which the ability to reason at all might seem to be compromised from the outset\footnote{Lewis Carroll gives in \cite{Carroll1895} a witty account of the implications of a failure to take up modus ponens explicitly as a rule of inference.}. We will show that -- on the 
mildest conceivable assumption that something happens in the world -- modus ponens
implies Boolean logic \textit{if} it is supplemented by the rule ``If a proposition is denied by the
physical world, then its negation is affirmed.'' 
These results are independent of whether the theory is classical, quantum or transquantum in Sorkin's hierarchy of 
physical theories \cite{Sorkin94}.
We will show that one currently favoured scheme for
interpreting quantum theory, the \emph{multiplicative scheme}, coincides with the adoption of modus ponens, together with a condition of finest grainedness.

\section{The three-fold structure}\label{sec:basics}
The details of any logical scheme for physics---in particular, the events about which it is intended to reason---will plainly depend on the system one has in mind.
Nonetheless, one can describe a class of schemes rather generally in terms of three components \cite{Sorkin07b}:
\begin{enumerate}[(i)]
\item the set, $\mathfrak{A}$, of all  \emph{questions} that can be asked about the system;
\item the set, $S$, of possible \emph{answers} to those questions; and
\item a collection, $\mathfrak{A^{*}}$, of \emph{answering maps} $\phi: \mathfrak{A} \rightarrow S $, exactly one of which corresponds to the physical world.
\end{enumerate}
While such a framework may not be the most general that could be conceived, it is broad enough to encompass all classical theories, including stochastic theories such as Brownian motion.
In such a classical physical theory $\mathfrak{A}$ is a Boolean algebra, $S = \mathbb{Z}_2 = \{0,1\}$ and $\phi$ is a homomorphism from $\mathfrak{A}$ into $\mathbb{Z}_{2}$, as we describe below.
To make the framework general enough to include quantum theories one might consider altering any or all of these three classical ingredients.
It is remarkable that the only change that appears to be necessary in order to accommodate quantum theories in a spacetime approach based on the Dirac--Feynman path integral is to free $\phi$ from the 
constraint that it be a homomorphism \cite{Sorkin07a, Sorkin07b, Sorkin11}.
We will assume the following about the three components.
\begin{enumerate}[(i)] 

\item  $\mathfrak{A}$ is a Boolean algebra, which we refer to as the \emph{event algebra}.
The elements of $\mathfrak{A}$ are equivalently and interchangeably referred to as \emph{propositions} or \emph{events}, where it is understood that these terms refer to \emph{unasserted} propositions. 
An  event $A$ can also be thought of as corresponding to the \textit{question}, ``Does $A$ happen?''.
The elements of $\mathfrak{A}$ are subsets of the set, $\Omega$, of spacetime histories of the physical system. For instance, in Brownian motion $\Omega$ is the set of Wiener paths.
Use of the term `event' to refer to a subset of $\Omega$ is standard for stochastic processes.
In the quantal case, $\Omega$ is the set of histories summed over in the path integral, for example the particle trajectories in non-relativistic quantum mechanics.
An example of an event in that case is the set of all trajectories in which the particle passes through some specified region of spacetime.

The Boolean operations of meet $\wedge$ and join $\vee$ are identified with the set operations of intersection $\cap$ and union $\cup$, respectively.
A note of warning: using in this context the words \emph{and} and \emph{or} to denote the algebra elements that result from these set operations can lead to ambiguity.
In this paper we will try to eliminate the ambiguity by the use of single inverted commas, so that `$A$ or $B$' denotes the event  $A\vee B$; `$A$ and $B$', the event $A\wedge B$.

The zero element $\emptyset\in \mathfrak{A}$ is the empty set and the unit element $\one\in\mathfrak{A}$ is $\Omega$ itself.
The operations of multiplication and addition of algebra elements are, respectively,
\begin{align*}
AB&:=A\cap B, \ \ \forall A, B \in \mathfrak{A};\\
A+B&:=(A\setminus B)\cup(B\setminus A), \ \ \forall A, B \in \mathfrak{A}\,.
\end{align*}
With these operations, $\mathfrak{A}$ \emph{is} an algebra in the sense of being a vector space over the finite field $\mathbb{Z}_2$.
A useful expression of the subset property is: $A \subseteq B \Leftrightarrow AB = A$.
We have, for all $A$ in $\mathfrak{A}$,
\begin{align}
AA&=A;\\
A+A&=\,\emptyset;\:\text{and}\\
{\neg}A&:=\Omega \setminus A = \one+A\,.
\end{align}
The event $\one+A$ may be referred to as $\neg A$, as the \emph{complement} of $A$, or again with single inverted commas, as `not $A$'\footnote{See \cite{Sorkin10} for a discussion of the ambiguity in the phrase ``not $A$''.}.
\item Together with the algebra of questions comes the space of potential \emph{answers} that the physical system can provide to those questions.
Whilst one can envisage any number of generalisations, with intermediate truth values for example, we follow Sorkin and keep as the answer space that of classical logic, namely the Boolean algebra $\mathbb{Z}_{2}\equiv\lbrace 0,1\rbrace\equiv\lbrace\text{false, true}\rbrace\equiv\lbrace\text{no, yes}\rbrace$.
To answer the question $A \in \mathfrak{A}$ with 1 (0) is to assert that the event $A$ does (does not) happen; equivalently, we say that $A$ is affirmed (denied).
\item Finally, one has the set $\mathfrak{A}^{*}$ of allowed \emph{answering maps}, also called \emph{co-events}\footnote{The notation $\mathfrak{A}^{*}$ reflects the nature of the co-event space as dual to the event algebra.}, $\phi:\mathfrak{A}\to\mathbb{Z}_{2}$.
We assume that a co-event is a non-constant map: $\phi \ne 0$ and $\phi \ne 1$.
That is, a co-event must affirm at least one event and deny at least one event.
To specify a co-event is to answer every physical question about the system, and thus to give as complete an account of what happens as one's theory permits.
The physical world corresponds to exactly one co-event from $\mathfrak{A}^{*}$. In other words, the physical world provides (or is equivalent to) a definite answer to every question and $\mathfrak{A}^{*}$ is the set of possible physical worlds. 
\end{enumerate}
This three-fold structure of event algebra $\mathfrak{A}$, answer space $\mathbb{Z}_2$ and collection of answering maps or co-events $\phi: \mathfrak{A} \rightarrow \mathbb{Z}_2$ makes sense out of possibilities that seem otherwise non-sensical \cite{Sorkin10}.
The three-fold structure is appropriate to physics, where perfectly sensible, meaningful events are not in themselves true or false (unlike, for example and on one view, mathematical statements).
Each event will either happen or not \emph{in the physical world}, but which it is is contingent.

This seems an appropriate point to note that the three-fold structure is also apparent in the framework for interpreting quantum physics
that came commonly to be known 
as ``Quantum Logic." There are major differences with
Sorkin's framework however.
In place of the Boolean algebra of propositions there is in Quantum Logic
an orthocomplemented lattice of subspaces of a Hilbert space. In place of the set of yes/no answers
 $\mathbb{Z}_2$,
there is the set of probabilities, real numbers between 0 and 1. And in place of the co-event, there 
is the state which maps each subspace of Hilbert space to a probability. Thus, at the very outset, the 
space of propositions in Quantum Logic has a non-Boolean character due to the focus on Hilbert space as the 
arena for the physics. Quantum Logic sprang from a canonical approach to quantum theory 
in which Hilbert space is fundamental. Hilbert space has 
no place in classical physics, hence the starting point -- the set of
unasserted propositions -- for Quantum Logic is different than in classical physics. In contrast, 
in a path integral approach to quantum physics as adopted by Sorkin, the axiomatic basis is a set of spacetime histories just as it is in classical physics and therefore the structure of the 
set of unasserted propositions is the same in both classical and quantum physics:
this is a \textit{unifying} framework.

\section{Rules of inference} \label{sec:rules}
If we somehow came to know the co-event that corresponds to the whole universe, then there would be no need for rules of inference: we would know everything already.
Rules of inference are needed because our knowledge is partial and limited and to extend that knowledge further we need to be able to deduce new facts from established ones.
As stressed by Sorkin, on this view dynamical laws in physics are rules of inference \cite{Sorkin11}: using the laws of gravity, we can infer from the position of the moon tonight its position yesterday and its position tomorrow.

For the purposes of this paper we call any condition restricting the collection of allowed co-events a rule of inference\footnote{An alternative is to call any condition on the allowed co-events a dynamical law.}.
One could begin by considering the set of \emph{all} non-constant maps from $\mathfrak{A}$ into $\mathbb{Z}_2$; a rule of inference is then any axiom that reduces this set.
One axiom that has been suggested \cite{Sorkin07a, Sorkin07b} is that of \emph{preclusion}, the axiom that an event of zero measure is denied.
Explicitly, if $\mu$ is the (classical, quantum or transquantum) measure on the event algebra, encoding both the dynamics and the initial conditions for the system, then $\mu(A) = 0 \Rightarrow \phi(A) = 0$.
We will return to preclusion in a later Section; for the time being, our attention will focus on rather more structural axioms.

First, let us define some properties of co-events that it might be desirable to impose.
In all the following definitions, $\phi$ is a co-event and we recall that $\phi$ is assumed not to be the zero map or unit map.
We begin with properties that reflect the algebraic structure of $\mathfrak{A}$ itself.
\begin{definition}\label{def:zero}
$\phi$ is \emph{zero-preserving} if
\begin{equation} \phi(\emptyset)=0 \label{eq:zero}\,.
\end{equation}
\end{definition}
\begin{definition}\label{def:unital}
$\phi$ is \emph{unital} if
\begin{equation}
\phi({\one})=1 \label{eq:unital}\,.
\end{equation}
One could call this condition ``the world is affirmed.''
\end{definition}
\begin{definition}\label{def:mult}
$\phi$ is \emph{multiplicative} if
\begin{equation}
\phi(AB)=\phi(A)\phi(B), \ \ \ \forall A, B \in \mathfrak{A} \label{eq:mult}\,.
\end{equation}
\end{definition}
\begin{definition}\label{def:add}
$\phi$ is \emph{additive} or \emph{linear} if
\begin{equation}
\phi(A+ B)=\phi(A)+ \phi(B), \ \ \ \forall A, B \in \mathfrak{A} \label{eq:add}\,.
\end{equation}
\end{definition}

A further set of conditions is motivated directly as the formalisation of the rules of inference that we use in  classical reasoning.
As mentioned in the Introduction, arguably the most desirable among these is modus ponens, commonly stated thus:
\begin{quote}
MP: \emph{If $A$ implies $B$ and $A$,  then $B$.}
\end{quote}
However, it is now easy to appreciate why care must be taken in distinguishing (mere unasserted) events from statements about the physical world, {\textit{i.e.}} affirmed or denied events.
The rules of inference we are interested in here are those that tell us how to deduce statements about what happens in the physical world from other such statements.
To render modus ponens fully in terms of the three-fold framework for physics, we re-express it as
\begin{quote}
MP: \emph{If `$A$ implies $B$' is affirmed and $A$ is affirmed, then $B$ is affirmed},
\end{quote}
where  `$A$ implies $B$' is an event, an element of $\mathfrak{A}$ which we denote symbolically as $A\to B := \neg (A \wedge(\neg B))$. We have,
\begin{align}
A\to B &= \one+A(\one+B)\nonumber\\
&=\one+A+AB,
\end{align}
which small manipulation shows, incidentally, how much easier it is to work with the arithmetic form of the operations than with $\wedge$, $\vee$ and $\neg$.
The condition of modus ponens is then:
\begin{definition}\label{def:MP}
 $\phi$ is MP if
\begin{equation}
\phi(A\to B)=1\,,\ \phi(A)=1\;\Rightarrow\;\phi(B)=1,\ \ \ \forall A, B \in \mathfrak{A}\label{eq:MP}\,.
\end{equation}
\end{definition}

Distinct from MP and from each other are the two strains of ``proof by contradiction,'' which we shall call C1 and C2.
In words, we can state them as follows:
\begin{quote}
C1: \emph{If event $A$ is affirmed, then its complement is denied.}
\end{quote}
\begin{quote}
C2: \emph{If event $A$ is denied, then its complement is affirmed}

\end{quote}
We point out that C1 and C2 are referred to as the \emph{law of contradiction} and the \emph{law of the excluded middle}, respectively, in \cite{RescherBrandom}. While closely related, these two notions are distinct and play a very different role in our analysis, with C2 being more important. As we will see in the next sections, C2 plays a complementary role to other conditions (MP, multiplicativity), while C1 is implied from those conditions\footnote{Note also, that at the level of the Boolean algebra of events we always have $\neg\neg A = A$ and, moreover, if we take ``law of the excluded middle'' to mean that every event
is either affirmed or denied then our three-fold framework respects it by fiat because that is just 
the statement that $\phi$ is a map to $\mathbb{Z}_2$ \cite{Sorkin10}. This illustrates how careful one must be to be clear.}. 
Formally:
\begin{definition}\label{def:c1}
$\phi$ is C1 if
\begin{equation}
\phi(A)=1\;\Rightarrow\; \phi(\one+A)=0,\ \ \ \forall A \in \mathfrak{A} \label{eq:C1}.
\end{equation}
\end{definition}
\begin{definition}\label{def:c2}
$\phi$ is C2 if
\begin{equation}
\phi(A)=0\;\Rightarrow\; \phi(\one+A)=1,\ \ \ \forall A \in \mathfrak{A} \label{eq:C2}.
\end{equation}
\end{definition}

\subsection{An example: classical physics, classical logic}

In classical physics we use classical logic because, in classical physics, One History Happens.
Indeed, the rules of inference known collectively as classical, Boolean logic follow from the axiom that the physical world corresponds to exactly one history in $\Omega$ \cite{Sorkin07b}.
In a later Section we will give a list of equivalent forms of this axiom in the case of finite $\Omega$; here, we note only that if the physical world corresponds to history $\gamma\in \Omega$ then all physical questions can be answered.
In other words, $\gamma$ gives rise to a co-event $\gamma^*: \mathfrak{A}\rightarrow \mathbb{Z}_2$, as
\begin{align}
\gamma^{*}(A)=&\begin{cases}
1&\text{if $\gamma \in A$}\\
0&\text{otherwise,}
\end{cases}\\
&\forall A\in\mathfrak{A}\,.\nonumber
\end{align}
It can be shown that such a $\gamma^*$ is both multiplicative and additive, \emph{i.e.} it is a homomorphism from $\mathfrak{A}$ into $\mathbb{Z}_2$.
It is easy to see that $\gamma^*$ is zero-preserving and unital, and one can further use its homomorphicity to prove it is C1, C2 and MP.
For example, we have
\begin{lemma}\label{lemma:c1c2}
If co-event $\phi$ is additive and unital then it is zero-preserving, C1 and C2.
\end{lemma}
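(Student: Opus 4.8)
The plan is to reduce all three conclusions to a single elementary computation made available by the arithmetic form of the Boolean operations. The engine is the identity $\neg A = \one + A$ recorded above, which converts complementation into addition of the unit. Since $\phi$ is by hypothesis additive and unital, the first step is to establish that, for every $A \in \mathfrak{A}$,
\[ \phi(\neg A) = \phi(\one + A) = \phi(\one) + \phi(A) = 1 + \phi(A). \]
This one-line consequence of the two hypotheses carries essentially the whole argument.

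For the zero-preserving property I would exploit the relation $A + A = \emptyset$ with $A = \one$, so that $\emptyset = \one + \one$. Applying additivity and then unitality gives $\phi(\emptyset) = \phi(\one) + \phi(\one) = 1 + 1 = 0$ in $\mathbb{Z}_2$, as required. (Equivalently, one may note directly that $\phi(\emptyset) = \phi(\emptyset + \emptyset) = \phi(\emptyset) + \phi(\emptyset) = 0$, since addition is performed in the field $\mathbb{Z}_2$.)

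The properties C1 and C2 then drop out of the displayed identity by inspecting the two possible values of $\phi(A)$. If $\phi(A) = 1$ then $\phi(\neg A) = 1 + 1 = 0$, which is exactly the conclusion of C1; if $\phi(A) = 0$ then $\phi(\neg A) = 1 + 0 = 1$, which is exactly the conclusion of C2. In this sense both strains of proof-by-contradiction are nothing more than the single statement $\phi(\neg A) = 1 + \phi(A)$, read off case by case.

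There is no substantive obstacle here: the entire content resides in the translation of $\neg$ into $\one + (\cdot)$ together with the fact that the arithmetic lives in $\mathbb{Z}_2$, where $1 + 1 = 0$. The only point meriting minor care is to verify that each step invokes solely additivity and unitality, and not multiplicativity or any further homomorphism property, since the purpose of the lemma is precisely to isolate which hypotheses are responsible for forcing C1 and C2.
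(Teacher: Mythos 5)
Your proof is correct and is essentially the paper's own argument: both rest on the single identity $\phi(\one+A)=1+\phi(A)$ (the paper phrases it as $\phi(\one+A)+\phi(A)=\phi(\one+A+A)=\phi(\one)=1$), from which C1, C2 and zero-preservation follow by reading off cases. The only cosmetic difference is that you apply additivity directly to $\one+A$ while the paper inserts $A+A=\emptyset$ into $\phi(\one)$ first; the content is identical.
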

\begin{proof}
Let $\phi$ be additive and unital. Then
\begin{equation*}
1 = \phi({\one}) = \phi({\one} + A + A) = \phi({\one}+ A) + \phi(A),\ \  \forall A\in\mathfrak{A}\,,
\end{equation*}
which implies that exactly one of $\phi({\one}+ A)$ and $\phi(A)$ is equal to 1.
So $\phi$ is C1 and C2, and $\phi(\emptyset) = 0$.
\end{proof}

If One History Happens, as in classical physics, the physical world fully respects the Boolean structure of the event algebra, and the logical connectives, \emph{and}, \emph{or}, \emph{not} and so forth may be used carelessly, without the need to specify whether they refer to asserted or to unasserted propositions.
One doesn't have to mind one's logical Ps and Qs over (potentially ambiguous) statements such as ``$A$ or $B$ happens'' when $\phi$ is a homomorphism:
\begin{lemma}\label{lemma:AorB}
If co-event $\phi$ is a  homomorphism then
$\phi(A\vee B) = 1 \iff \phi(A) = 1$ or $ \phi(B) =1$.
\end{lemma}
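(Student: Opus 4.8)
The plan is to reduce the Boolean join to the arithmetic (ring) operations of the event algebra and then exploit homomorphicity. First I would establish the algebraic identity
\begin{equation*}
A\vee B = A + B + AB\,.
\end{equation*}
Since multiplication is intersection and addition is symmetric difference, the sets $A + B = (A\setminus B)\cup(B\setminus A)$ and $AB = A\cap B$ are disjoint; their symmetric difference therefore coincides with their union, which is precisely $(A\setminus B)\cup(B\setminus A)\cup(A\cap B) = A\cup B = A\vee B$. This is exactly the kind of small manipulation that is cleaner in the arithmetic form than with $\wedge$, $\vee$ and $\neg$.

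Next, recalling that a homomorphism is here both additive (Definition \ref{def:add}) and multiplicative (Definition \ref{def:mult}), I would apply $\phi$ to this identity to obtain
\begin{equation*}
\phi(A\vee B) = \phi(A) + \phi(B) + \phi(A)\phi(B)\,,
\end{equation*}
where the arithmetic is now carried out in $\mathbb{Z}_2$.

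Finally, I would evaluate this $\mathbb{Z}_2$-valued expression directly. Writing $x = \phi(A)$ and $y = \phi(B)$, each in $\{0,1\}$, the quantity $x + y + xy$ is just the Boolean \emph{or} of $x$ and $y$: a quick check of the four cases shows it vanishes exactly when $x = y = 0$ and equals $1$ otherwise. Hence $\phi(A\vee B) = 1$ if and only if $\phi(A) = 1$ or $\phi(B) = 1$, as claimed.

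There is no serious obstacle here; the only point deserving a moment's care is the disjointness observation underlying $A\vee B = A + B + AB$, without which one could not replace the symmetric difference by a union. It is worth noting that the argument uses only additivity and multiplicativity, and requires neither the unital nor the zero-preserving property, so it holds for any such homomorphism into $\mathbb{Z}_2$.
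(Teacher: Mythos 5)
Your proof is correct and takes essentially the same route as the paper's: reduce the join to the ring identity $A\vee B = A + B + AB$, apply additivity and multiplicativity to get $\phi(A\vee B) = \phi(A)+\phi(B)+\phi(A)\phi(B)$, and observe this expression is the Boolean \emph{or} in $\mathbb{Z}_2$. The only cosmetic difference is that the paper finishes by factoring the condition as $(\phi(A)+1)(\phi(B)+1)=0$ instead of your four-case check.
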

\begin{proof}
\begin{align*}
& \phi(A\vee B) = 1 \\
\iff &\phi(AB + A+B) = 1 \\
\iff &\phi(A)\phi(B) + \phi(A)+\phi(B) = 1\\
\iff &(\phi(A) + 1)(\phi(B) +1) = 0 \,.\qedhere
\end{align*}
\end{proof}

So no ambiguity arises because `$A$ or $B$' happens if and only if  $A$ happens or $B$ happens.

\section{Results}\label{sec:tlp}
\begin{theorem}\label{theorem:MPfiltermult}
The following conditions on a co-event $\phi$ are equivalent:
\begin{enumerate}[(i)]
\item $\phi$ is MP and unital;
\item $\phi^{-1}(1):=\lbrace A\in\mathfrak{A}\:|\:\phi(A)=1\rbrace$ is a filter\footnote{We assume a filter is non-empty and not equal to the whole of $\mathfrak{A}$.};
\item $\phi$ is multiplicative.
\end{enumerate}
\end{theorem}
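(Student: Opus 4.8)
The plan is to prove the cycle (i) $\Rightarrow$ (ii) $\Rightarrow$ (iii) $\Rightarrow$ (i); each arrow should reduce to one short computation in the arithmetic form of the operations, together with elementary properties of filters and of the two-valued answer space $\mathbb{Z}_2$.

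For (i) $\Rightarrow$ (ii) I would assume $\phi$ is MP and unital and set $F=\phi^{-1}(1)$. Non-emptiness is immediate since $\phi(\one)=1$, and $F\neq\mathfrak{A}$ follows from the standing assumption that $\phi\neq 1$. Upward closure is the easy half: if $\phi(A)=1$ and $A\subseteq B$ then $AB=A$, so $A\to B=\one+A+AB=\one$, whence $\phi(A\to B)=\phi(\one)=1$ and MP yields $\phi(B)=1$. Closure under meets is where the real work lies: given $\phi(A)=\phi(B)=1$ I want $\phi(AB)=1$, and the trick is to feed MP the correct implication. I would first establish the ``currying'' identity $A\to(B\to C)=AB\to C$ by a one-line expansion; taking $C=AB$ then collapses the left side to $AB\to AB=\one$, so that $\phi(A\to(B\to AB))=1$, and two successive applications of MP (stripping $A$, then $B$) give $\phi(AB)=1$.

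For (ii) $\Rightarrow$ (iii) I would assume $F=\phi^{-1}(1)$ is a filter and verify $\phi(AB)=\phi(A)\phi(B)$ by cases on the $\mathbb{Z}_2$-values: if $\phi(A)=\phi(B)=1$ then $A,B\in F$ and meet-closure gives $AB\in F$, so both sides equal $1$; if instead $\phi(A)=0$ (say), the right side is $0$, and since $AB\subseteq A$, upward closure would force $A\in F$ were $AB\in F$, so $\phi(AB)=0$ as well. For (iii) $\Rightarrow$ (i), assuming $\phi$ multiplicative, unitality follows by choosing any $A$ with $\phi(A)=1$ (one exists since $\phi\neq 0$) and reading off $\phi(A)=\phi(A\one)=\phi(A)\phi(\one)$; MP follows from the identity $A(A\to B)=AB$, since then $\phi(A)=\phi(A\to B)=1$ gives $\phi(AB)=\phi(A)\phi(A\to B)=1$, while $\phi(AB)=\phi(A)\phi(B)=\phi(B)$.

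I expect the main obstacle to be the meet-closure step in (i) $\Rightarrow$ (ii): the remaining implications are essentially direct once the small identities are in hand, but extracting the conjunction property $\phi(A)=\phi(B)=1\Rightarrow\phi(AB)=1$ from MP alone requires spotting that $A\to(B\to AB)$ reduces to $\one$, so that modus ponens can be chained twice.
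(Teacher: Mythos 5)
Your proof is correct, and it follows the same overall plan as the paper: the cycle (i) $\Rightarrow$ (ii) $\Rightarrow$ (iii) $\Rightarrow$ (i), with your arguments for (ii) $\Rightarrow$ (iii) (case analysis on $\mathbb{Z}_2$-values using meet-closure and the contrapositive of upward closure) and for (iii) $\Rightarrow$ (i) (unitality from $\phi \neq 0$, and MP from the identity $A(A\to B) = AB$) essentially verbatim those of the paper. The one genuine point of divergence is the meet-closure step in (i) $\Rightarrow$ (ii), which you correctly identify as the crux. You prove the currying identity $A\to(B\to C) = AB\to C$ (which checks out: both sides expand to $\one + AB + ABC$), specialise to $C = AB$ so that the whole implication collapses to $\one$ and is affirmed by unitality alone, and then chain MP twice. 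The paper instead observes that $D(C\to CD) = D$, i.e.\ that $C\to CD$ is a superset of $D$, invokes the upward-closure property already established in the first half of the proof to get $\phi(C\to CD) = 1$, and applies MP only once. The two tactics are of equal difficulty; yours has the small advantage of being self-contained (it does not reuse the superset step), while the paper's is slightly more economical in its use of MP. Either way the theorem is fully established.
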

\begin{proof}
\begin{enumerate}[]
\item (i) $\Rightarrow$ (ii)

Let $\phi$ be MP and unital.

First, we show that the superset of an affirmed event is affirmed.
Let $\phi(A) =1$ and $B$ be such that $AB = A$.
Then
\begin{align*}
\phi(A\to B)&=\phi(1+A+A)\nonumber\\
&=\phi(1)\nonumber\\
&=1,
\end{align*}
by unitality.
By MP it follows that $\phi(B)=1$.

Now we show that the intersection of two affirmed events is also affirmed.
Let $\phi(C)=\phi(D) =1$.
We have that $D(C\to CD) = D(1+C+CD) = D$, and so $\phi(D(C\to CD)) =1$.
By the first part of the proof, this implies that $\phi(C\to CD) =1$, so that by MP $\phi(CD) =1$.

Finally, $\phi$ is unital so $\phi^{-1}(1)$ is non-empty and $\phi\ne 1$ so $\phi^{-1}(1)$ is not equal to $\mathfrak{A}$.

\item (ii) $\Rightarrow$ (iii)

Let $\phi^{-1}(1)$ be a filter and $A,B\in \mathfrak{A}$.
Then there are two cases to check.
\begin{enumerate}[(a)]
\item If $\phi(A)=\phi(B)=1$ then the filter property implies that $\phi(AB)=1$.
\item Assume without loss of generality that $\phi(A) = 0$.
Since $A$ is a superset of $AB$, we must therefore have that $\phi(AB) = 0$; otherwise, the filter property would lead to the conclusion that $\phi(A) =1$: a contradiction.
 \end{enumerate}
So $\phi$ is multiplicative.

\item (iii) $\Rightarrow$ (i)

Let $\phi$ be multiplicative.
Since $\phi\ne 0$, $\exists X\in \mathfrak{A}$ s.t. $\phi(X) = 1$.
Then,
\begin{equation*}
1 = \phi(X) = \phi(\one\cdot X) = \phi(\one)\phi(X) = \phi(\one),
\end{equation*}
so $\phi$ is unital.

Now suppose $\phi(A) = \phi(A\to B)=1$.
We have that $A(A\to B) = A(1 + A + AB) = AB$, and thus $\phi(AB)=\phi(A)\phi(A\to B)=1$.
It follows that $\phi(A)\phi(B) = 1$, so that $\phi(B)=1$.
So $\phi$ is MP.\qedhere
\end{enumerate}
\end{proof}

Note, however, the following.
\begin{remark}
MP alone is not enough to guarantee multiplicativity, as shown by the following example.
Consider the event algebra $\mathfrak{A}=\lbrace \emptyset, \one\rbrace$, and the co-event
\begin{align*}
\phi(\emptyset)&=1;\\
\phi(\one)&=0.
\end{align*}
MP is trivially satisfied: $\phi(\emptyset\to \one)=\phi(\one+\emptyset+\emptyset)=\phi(\one) =0$, while $\phi(\one\to\emptyset)=\phi(\one+\one+\emptyset)=\phi(\emptyset)=1$, but $\phi(\one)=0$.
So there \emph{is} no pair of events $A$ and $B$ such that $\phi(A\to B),\:\phi(A)=1$, \emph{i.e.}~for which we even need to check whether $\phi(B)=1$.
Multiplicativity fails, however:
\begin{align*}
\phi(\emptyset\cdot\one)&=\phi(\emptyset)=1\\
&\neq\phi(\emptyset)\phi(\one)=1\cdot0=0.
\end{align*}
Neither does MP together with \emph{zero-preservation} guarantee multiplicativity, as demonstrated again by an example.
Consider this time the four-element event algebra $\mathfrak{A}=\lbrace\emptyset,A,B,\one\rbrace$, where  $B=\one+A$, and the following zero-preserving co-event:
\begin{align*}
\phi(\emptyset)&=\phi(B)=\phi(\one)=0;\\
\phi(A)&=1.
\end{align*}
MP is trivially satisfied by an argument similar to that above, but $\phi$ is not multiplicative, since
\begin{align*}
\phi(A\cdot\one)&=\phi(A)=1\\
&\neq\phi(A)\phi(\one)=1\cdot0=0.
\end{align*}
\end{remark}

Having established the relation between multiplicativity of a co-event and the pillar of classical inference---MP---what can be said of the proofs by contradiction?
From the proof of Theorem \ref{theorem:MPfiltermult} we know that a multiplicative $\phi$ is unital.
It is also C1:
\begin{lemma}
\label{lemma:C1}
If $\phi$ is a multiplicative co-event then $\phi$ is zero-preserving and C1.
\end{lemma}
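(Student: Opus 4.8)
The plan is to derive zero-preservation first and then obtain C1 from it, using the elementary fact that an event and its complement are disjoint.

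For zero-preservation, I would resist the tempting but useless move of writing $\phi(\emptyset) = \phi(\emptyset\cdot\emptyset) = \phi(\emptyset)^2$: over $\mathbb{Z}_2$ both $0$ and $1$ are idempotent, so this identity tells us nothing. Instead the essential ingredient is that a co-event is assumed non-constant, and in particular $\phi \neq 1$, so there exists some $X \in \mathfrak{A}$ with $\phi(X) = 0$. Since $X\emptyset = \emptyset$, multiplicativity then gives $\phi(\emptyset) = \phi(X\emptyset) = \phi(X)\phi(\emptyset) = 0\cdot\phi(\emptyset) = 0$, which establishes $\phi(\emptyset) = 0$.

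For C1, I would exploit that $A$ and its complement $\one+A$ have empty intersection. Using $A\one = A$, $AA = A$ and $A+A = \emptyset$, one has the one-line identity $A(\one+A) = A + A = \emptyset$. Applying multiplicativity and then the zero-preservation just proved, $\phi(A)\phi(\one+A) = \phi\big(A(\one+A)\big) = \phi(\emptyset) = 0$. Hence whenever $\phi(A) = 1$ we are forced to conclude $\phi(\one+A) = 0$, which is exactly the condition C1.

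The only genuinely delicate point—the main obstacle, such as it is—is the zero-preservation step, where one must avoid concluding prematurely from idempotency and instead invoke the non-constancy of $\phi$ to supply an event valued $0$. Once $\phi(\emptyset)=0$ is in hand, C1 follows from a single algebraic identity together with one use of multiplicativity.
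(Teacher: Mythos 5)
Your proof is correct and follows essentially the same route as the paper's: both invoke non-constancy ($\phi \neq 1$) to produce an event valued $0$, use multiplicativity to get $\phi(\emptyset)=0$, and then obtain C1 from the identity $A(\one+A)=\emptyset$. The only (immaterial) difference is that the paper derives zero-preservation from $\phi\left(A(\one+A)\right)=\phi(A)\phi(\one+A)=0$ rather than from $\phi(X\emptyset)=\phi(X)\phi(\emptyset)$.
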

\begin{proof}
Let $\phi$ be multiplicative. $\phi\ne 1$, so $\exists A\in \mathfrak{A}$ s.t. $\phi(A) = 0$.
Thus
\begin{equation*}
\phi(\emptyset) = \phi\left(A(\one + A)\right) = \phi(A)\phi(\one + A) = 0\,.
\end{equation*}
Now let $\phi(B)=1$ for some $B\in\mathfrak{A}$.
Then
\begin{align*}
0&=\phi(\emptyset) = \phi\left(B(\one+B)\right) =\phi(B)\phi(\one+B)\\
&\Rightarrow\;\phi(\one + B)=0\,.\qedhere
\end{align*}
\end{proof}

\begin{corollary}
If the co-event $\phi$ is MP and unital  then it is C1.
\end{corollary}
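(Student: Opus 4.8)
The plan is to chain together the two results immediately preceding this corollary, so that no fresh calculation is needed. First I would invoke the equivalence established in Theorem~\ref{theorem:MPfiltermult}, in particular the implication (i) $\Rightarrow$ (iii): a co-event that is MP and unital is multiplicative. Having secured multiplicativity, I would then apply Lemma~\ref{lemma:C1}, which asserts that any multiplicative co-event is (zero-preserving and) C1. Composing the two implications yields MP and unital $\Rightarrow$ multiplicative $\Rightarrow$ C1, which is exactly the claim. The whole proof is thus a single line once the machinery above is in place.

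The only genuine decision is whether to route through multiplicativity at all, or instead to attempt a direct derivation of C1 from MP and unitality. A direct attack would start from $\phi(A) = 1$ and try to force $\phi(\one + A) = 0$ using only the MP condition together with $\phi(\one) = 1$. The tempting handle is the implication event $A \to (\one + A) = \one + A + A(\one + A) = \one + A$, but feeding this into MP only tells us that $\phi(\one+A) = 1$ implies $\phi(\one+A) = 1$, which is circular and gives no leverage. What actually kills C1 is the fact that $A$ and its complement are disjoint, so that a multiplicative, zero-preserving $\phi$ forces $\phi(A)\,\phi(\neg A) = \phi(A\cdot\neg A) = \phi(\emptyset) = 0$; a direct argument would therefore have to reconstruct precisely the content of Theorem~\ref{theorem:MPfiltermult} and Lemma~\ref{lemma:C1}. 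Since that work is already done, the indirect route is both shorter and more transparent.

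Accordingly, I anticipate no real obstacle here: the corollary is an immediate consequence of the established equivalence and the preceding lemma, and the proof amounts to citing them in sequence.
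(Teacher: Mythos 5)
Your proposal is correct and matches the paper's (implicit) argument exactly: the corollary is placed immediately after Lemma~\ref{lemma:C1} precisely so that it follows by chaining Theorem~\ref{theorem:MPfiltermult} (i)\,$\Rightarrow$\,(iii) with that lemma. Your side remarks about why a direct derivation gains nothing are also sound, but nothing beyond the two citations is needed.
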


It was shown in the previous Section that if  a co-event $\phi$ is a homomorphism then it is MP, C1 and C2.
Conversely, we can ask: what conditions imply that $\phi$ is a homomorphism?

\begin{theorem}\label{theorem:hom}
If co-event $\phi$ is unital, MP and C2 then it is a homomorphism.
\end{theorem}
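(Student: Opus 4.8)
The plan is to show that a unital, MP, C2 co-event is both multiplicative and additive, since a homomorphism is precisely a map that is both. Multiplicativity comes essentially for free: by Theorem \ref{theorem:MPfiltermult}, unitality together with MP is equivalent to multiplicativity, and by Lemma \ref{lemma:C1} multiplicativity in turn delivers zero-preservation and C1. So the entire burden falls on establishing additivity, $\phi(A+B) = \phi(A) + \phi(B)$ for all $A,B$. This is the step I expect to be the main obstacle, and it is also where C2 must finally be used, since nothing derived so far has called on it.

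First I would record a complement rule. As $\phi$ is now known to satisfy both C1 and C2, the two implications combine into a single statement: $\phi(\one + A) = 1 + \phi(A)$ for every $A$. Indeed, when $\phi(A)=1$, C1 forces $\phi(\one+A)=0$, and when $\phi(A)=0$, C2 forces $\phi(\one+A)=1$. This is the precise point at which C2 is indispensable; without it we would have only the one-sided C1.

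Next I would derive a join rule. By De Morgan, $A \vee B = \neg(\neg A \wedge \neg B) = \one + (\one+A)(\one+B)$, and applying multiplicativity together with the complement rule gives, after reduction over $\mathbb{Z}_2$,
\begin{equation*}
\phi(A \vee B) = \phi(A) + \phi(B) + \phi(A)\phi(B)\,.
\end{equation*}
In particular, if $A$ and $B$ are disjoint then $AB = \emptyset$, so $\phi(A)\phi(B) = \phi(AB) = \phi(\emptyset) = 0$, and the join rule collapses to plain additivity on disjoint events, $\phi(A \vee B) = \phi(A) + \phi(B)$.

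Finally I would split the symmetric difference into disjoint pieces. Setting $P := A(\one+B)$ and $Q := (\one+A)B$, one checks that $PQ = \emptyset$ (since $A(\one+A)=\emptyset$) and that $A + B = P \vee Q$, so the disjoint join rule yields $\phi(A+B) = \phi(P) + \phi(Q)$. Evaluating each factor by multiplicativity and the complement rule gives $\phi(P) = \phi(A)\bigl(1+\phi(B)\bigr)$ and $\phi(Q) = \bigl(1+\phi(A)\bigr)\phi(B)$; summing these over $\mathbb{Z}_2$, the two cross terms $\phi(A)\phi(B)$ cancel and leave $\phi(A+B) = \phi(A) + \phi(B)$. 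Combined with multiplicativity, this shows $\phi$ is a homomorphism. The one thing to keep straight throughout is the $\mathbb{Z}_2$ bookkeeping, where $1+1=0$; it is exactly this cancellation that makes each cross term disappear at the right moment.
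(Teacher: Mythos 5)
Your proof is correct, and while its overall skeleton matches the paper's --- both first invoke Theorem \ref{theorem:MPfiltermult} to get multiplicativity from unitality and MP, then Lemma \ref{lemma:C1} to get zero-preservation and C1, then fuse C1 and C2 into the complement rule $\phi(\one+A)=1+\phi(A)$ --- the actual derivation of additivity is genuinely different. The paper proceeds by brute force: it multiplies the two instances $\phi(A+B)+\phi(\one+A+B)=1$ and $\phi(AB)+\phi(\one+AB)=1$ of the complement identity, expands the product using multiplicativity (noting $(A+B)AB=\emptyset$), and simplifies until $\phi(A+B)=\phi(A)+\phi(B)$ falls out of one chain of $\mathbb{Z}_2$ algebra. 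You instead build two reusable intermediate facts --- the inclusion--exclusion rule $\phi(A\vee B)=\phi(A)+\phi(B)+\phi(A)\phi(B)$ and its specialisation to additivity on disjoint events --- and then split the symmetric difference into the disjoint pieces $A(\one+B)$ and $(\one+A)B$. Your route is more modular and arguably more illuminating: the join rule you derive is precisely the identity underlying the paper's Lemma \ref{lemma:AorB} (there proved only for homomorphisms), and disjoint additivity makes transparent \emph{why} a multiplicative co-event satisfying the complement rule must behave like a classical truth assignment; the paper's computation is shorter but opaque, with the cancellations appearing only at the end. Both are sound; the choice is between a compact calculation and a structured one whose intermediate statements have independent interest.
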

\begin{proof}
Let $\phi$ be unital, MP and C2. By Theorem \ref{theorem:MPfiltermult} $\phi$ is multiplicative, and by Lemma \ref{lemma:C1} it is C1.

We need  to show that $\phi$ is additive.
C1 and C2 imply $\phi(X) + \phi({\one} + X) = 1$ for all $X\in \mathfrak{A}$.
Let $A,B\in \mathfrak{A}$.
\begin{align*}
{}& \phi(A+B) + \phi(\one+A+B) =1 \ \ \ {\textrm{and}}\ \ \ \phi(AB) + \phi(\one+AB) =1\\
\Rightarrow\, &\left[\phi(A+B) + \phi(\one+A+B)\right]\left[\phi(AB) + \phi(\one +AB)\right] =1\\
\Rightarrow\, & \phi(A+B)\phi(AB) + \phi(A+B)\phi(\one+AB) + \phi(\one+A+B)\phi(AB) + \phi(\one+A+B)\phi(\one +AB) =1 \\
\Rightarrow\, & \phi\left((A+B)AB\right) + \phi\left((A+B)(\one+AB)\right) + \phi\left((\one+A+B)AB\right) + \phi\left((\one+A+B)(\one +AB)\right) =1 \\
\Rightarrow\, & \phi(\emptyset) + \phi(A+B) + \phi(AB) + \phi(\one+A+B+AB) = 1\\
\Rightarrow\, & 0+ \phi(A+B) + \phi(AB) + \phi\left((\one+A)(\one+B)\right) = 1\\
\Rightarrow\, & \phi(A+B) + \phi(AB) + \phi(\one+A) \phi(\one +B) = 1\\
\Rightarrow\, & \phi(A+B) + \phi(AB) + (1 + \phi(A))(1+\phi(B)) = 1\\
\Rightarrow\, & \phi(A+B) + \phi(AB) + 1 + \phi(A) + \phi(B) + \phi(AB) = 1 \\
\Rightarrow\, & \phi(A+B) = \phi(A) + \phi(B).
\qedhere
\end{align*}
\end{proof}

Since zero-preservation and C2 imply unitality we can replace the condition of unitality by that of zero-preservation:
\begin{corollary}
If co-event $\phi$ is  zero-preserving, MP and C2 then it is a homomorphism.
\end{corollary}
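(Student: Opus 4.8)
The plan is to deduce this corollary directly from Theorem \ref{theorem:hom} by showing that its two algebraic hypotheses, zero-preservation and C2, together force unitality; once that is done, the hypotheses become exactly ``unital, MP and C2,'' which are those of the Theorem, so its conclusion applies verbatim.

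First I would establish unitality from zero-preservation and C2. Since $\phi$ is zero-preserving we have $\phi(\emptyset)=0$, so the antecedent of C2 is satisfied for the event $A=\emptyset$. Applying C2 then yields $\phi(\one+\emptyset)=1$. Recalling from the definition of addition that $\one+\emptyset=(\one\setminus\emptyset)\cup(\emptyset\setminus\one)=\one$, this says precisely $\phi(\one)=1$, i.e.\ $\phi$ is unital.

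With unitality in hand, $\phi$ satisfies all three hypotheses of Theorem \ref{theorem:hom}, so I would simply invoke that result to conclude that $\phi$ is a homomorphism. I do not expect a genuine obstacle here: the whole content is the one-line implication ``zero-preserving and C2 $\Rightarrow$ unital,'' and the only point needing a moment's care is the algebraic identity $\one+\emptyset=\one$, which follows immediately from the definition $A+B:=(A\setminus B)\cup(B\setminus A)$.
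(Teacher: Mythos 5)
Your proposal is correct and follows exactly the paper's route: the paper justifies this corollary with the one-line observation that zero-preservation and C2 imply unitality (via C2 applied to $\emptyset$), and then invokes Theorem \ref{theorem:hom}. Your write-up simply makes that implication explicit, which is a faithful filling-in of the same argument.
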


Theorem \ref{theorem:hom} establishes that, as long as $\phi(\one) = 1$ (the world is affirmed), modus ponens needs the addition of only the rule C2 to lead to classical logic.

\section{A Unifying Proposal}
\subsection{Classical physics revisited}
We mentioned that when One History Happens, the corresponding co-event is a homomorphism.
What about the converse?
When the set of spacetime histories $\Omega$ is finite, the event algebra $\mathfrak{A}$ is the power set $2^\Omega$ of $\Omega$, and in this case the Stone representation theorem tells us that the set of (non-zero) homomorphisms from $\mathfrak{A}$ to $\mathbb{Z}_2$ is isomorphic to $\Omega$.
Thus, the axiom that exactly one history from $\Omega$ corresponds to the physical world is equivalent---in the finite case---to the assumption that the co-event that corresponds to the physical world is a homomorphism.
This is just one of the possible equivalent reformulations of the One History Happens axiom that defines classical physics; we provide a partial list below.
Before doing so we must first introduce classical \emph{dynamics} as a rule of inference.
The dynamics are encoded in a probability measure $\mu$, a non-negative real function $\mu: \mathfrak{A} \rightarrow \mathbb{R}$ which satisfies the Kolmogorov sum rules and $\mu(\mathbf{1}) = 1$.
We call an event in $\mathfrak{A}$ such that $\mu(A) =0$ a \emph{null} event.
Classical dynamical law requires that the history that corresponds to the physical world not be an element of any null event: a null event cannot happen.
The co-event $\phi$ that corresponds to the physical world is therefore required to be {\emph{preclusive}}, where
\begin{definition}
A co-event $\phi$ is {\emph{preclusive}} if
\begin{equation}
\mu(A)=0 \Rightarrow \phi(A) = 0, \ \ \ \forall A \in \mathfrak{A}\,.
\end{equation}
\end{definition}
\noindent We will also make use of the following definitions:
\begin{definition}
A filter $F\subseteq \mathfrak{A}$ is \emph{preclusive} if none of its elements are null.
\end{definition}
\begin{definition}
An event $A\in \mathfrak{A}$ is \emph{stymied} if it is a subset of a null event.
\end{definition}
The physical world in a classical theory when $\Omega$ is finite is then described equivalently by any of the following.
\begin{enumerate}[(i)]
\item A single history, an element of $\Omega$, which is not an element of any null event.
\item A minimal   non-empty non-stymied event (ordered by inclusion).
\item A preclusive ultrafilter on $\Omega$.
\item A maximal preclusive filter (ordered by inclusion).
\item A preclusive homomorphism $\phi: \mathfrak{A} \rightarrow \mathbb{Z}_2$.
\item A preclusive co-event for which all classical, Boolean rules of inference hold.
 \item A preclusive, unital, MP, C2 co-event.
\item  A minimal preclusive, multiplicative co-event, where minimality is in the order
\begin{equation}\label{eq:order}
\phi_1 \preceq \phi_2  \ \ {\textrm{if}}\ \  \phi_2(A)=1 \Rightarrow \phi_1(A)=1\,.
\end{equation}
\item A minimal preclusive, unital, MP co-event, where again minimality is in the order (\ref{eq:order}).
\end{enumerate}

The equivalence of item (vii) is the import of Theorem \ref{theorem:hom}.
The final two items (viii) and (ix) introduce the concept of \emph{minimality}, which is a finest grainedness condition or a Principle of Maximal Detail: nature affirms as many events as possible without violating preclusion. That the conditions in (viii) imply that $\phi$ is a homomorphism is 
proved by Sorkin \cite{Sorkin10,Sorkin11}.
That (viii) and (ix) are  equivalent is the import of Theorem \ref{theorem:MPfiltermult}.

In a classical theory one is free to consider any or all of these as corresponding to the physical world, since each is equivalent to a single history $\gamma \in \Omega$. 

\subsection{Quantum Measure Theory}\label{sec:reality}
Quantum theories find their place in the framework of GMT at the second level of a countably infinite hierarchy of theories labelled by the amount of interference there is between histories \cite{Sorkin94}.
A quantum measure theory has the three-fold structure described in Section \ref{sec:basics}, just as a classical theory does, and it too is based on a set $\Omega$ of spacetime histories---the histories summed over in the path integral for the theory.
The departure from a classical theory is encoded in the nature of the measure $\mu$ which is in general no longer a probability measure.
Indeed, given by the path integral, a quantal $\mu$ does not satisfy the Kolmogorov sum rule but, rather, a quadratic analogue of it \cite{Sorkin94,Sorkin97,Sorkin07a}.
The existence of interference between histories means that there are quantum measure systems for which the union of all the null events is the whole of $\Omega$.
Examples are the three--slit experiment \cite{Sorkin07a}, the Kochen--Specker antinomy \cite{KochenSpecker67,Bell66}, \cite{Dowker08,Dowker11} and the inequality--free version of Bell's theorem due to Greenberger, Horne and Zeilinger \cite{GHZ89,GHSZ90} and Mermin \cite{Mermin90a,Mermin90b} \cite{MikeDion}.
The condition of preclusion therefore runs into conflict with the proposal that the physical world 
corresponds to a single history since if every history is an element of some null event 
there is no history that can happen: \emph{reductio ad absurdum}.

Choosing to uphold preclusion as a dynamical law means therefore that of the above list of 9 equivalent descriptions of a classical physical world (i)  fails in the quantal setting, and so do (iii), (v), (vi) and (vii).
However, the other 4 --- (ii), (iv), (viii) and (ix) --- survive and remain mutually equivalent for a finite quantal measure theory.
That (ii), (iv) and (viii) are equivalent can be shown using the fact that a multiplicative co-event $\phi$ defines and is defined by its \emph{support}, $F(\phi)\in \mathfrak{A}$, the intersection of all those events that are affirmed by $\phi$:
\begin{equation}
F(\phi) := \bigcap_{S\in \phi^{-1}(1)} S\,.
\end{equation}

Adopting (viii) as the axiom for the possible co-events of a theory gives the resulting scheme its name: the multiplicative scheme. The multiplicative scheme is a unifying proposal: whether classical or quantum, the physical world is a minimal preclusive multiplicative co-event \cite{Sorkin07b}.
What we have shown here is that it could just as well be dubbed the ``modus ponens scheme''.

\section{Final Words}
With hindsight, we can see that the belief that the geometry of physical space was fixed and Euclidean came about because deviations from Euclidean geometry at non-relativistic speeds and small curvatures are difficult to detect.
In a similar vein, Sorkin suggests, the need for deviations from classical rules of inference about physical events lay undetected until the discovery of quantum phenomena (see however \cite{Sorkin10}).
That's all very well, but it could seem much harder to wean ourselves off the structure of classical logic than to give up Euclidean geometry.
To those who feel that classical rules of inference are essential to science this reassurance can be offered: in GMT classical rules of inference are used to reason about the co-events themselves, because a single co-event corresponds to the physical world.

Moreover, in the multiplicative scheme (to the extent that the finite system case is a good guide to the more general, infinite case) each co-event can be equivalently understood in terms of its support---a subset of $\Omega$.
In Hartle's Generalised Quantum Mechanics \cite{Hartle93,Hartle95}, this subset would be called a {\emph{coarse--grained history}}; the proposal of the multiplicative scheme is to describe the physical world as a single coarse--grained history.
The altered rules of inference in the multiplicative scheme for GMT are no more of a conceptual leap than this: the physical world is not as fine grained as it might have been, and there are some details which are missing, ontologically.
\
Furthermore, the results reported here reveal the alteration of logic in the multiplicative scheme to be the mildest possible modification: keeping MP and relinquishing only C2.
Relinquishing C2 in physics means allowing the possibility that an electron is not inside a box and not outside it either. Another example is accepting the possibility that a photon in a double slit experiment does not pass through the left slit and does not pass through the right slit.
At the level of electrons and photons, such a non-classical state of affairs is not too hard to swallow; indeed, very many, very similar statements are commonly made about the microscopic details of quantum systems.
The multiplicative scheme for GMT is a proposal for making precise the nature of Feynman's ``logical tightrope'' and raises the important question: ``Are violations of classical logic confined to the microscopic realm?''.
Answering this question becomes a matter of calculation within any given theory \cite{Sorkin11}.

\begin{acknowledgements}
We thank Rafael Sorkin for helpful discussions. Research at Perimeter Institute for Theoretical Physics is supported in part by the Government of Canada through NSERC and by the Province of Ontario through MRI.
FD and PW are supported in part by COST Action MP1006. PW was supported in part by 
EPSRC grant EP/K022717/1. PW acknowledges support from the University of Athens
during this work.\end{acknowledgements}
\bibliography{global}

\end{document}